\theoremstyle{plain} 
\newtheorem{theorem}{Theorem}[section]
\newtheorem{lemma}[theorem]{Lemma}
\newtheorem{corollary}[theorem]{Corollary}
\theoremstyle{definition} 
\newcommand{\TokenSwapping}{\textsc{TokenSwapping}}
\newcommand{\WeightedTokenSwapping}{\textsc{WeightedTokenSwapping}}
\newcommand{\HappySwapAlgo}{\textsc{HappySwapAlg}}
\newcommand{\ExtendedCycleAlgo}{\textsc{ExtendedCycleAlg}}
\title{Settling Weighted Token Swapping up to Algorithmic Barriers}
 \author{Nicole Wein\\University of Michigan\thanks{\texttt{nswein@umich.edu}} \and Guanyu (Tony) Zhang\\University of Michigan\thanks{\texttt{guanyuzh@umich.edu}}}
\date{}
\begin{document}

\maketitle

\begin{abstract}
We study the \emph{weighted token swapping} problem, in which we are given a graph on $n$ vertices, $n$ weighted tokens, an initial assignment of one token to each vertex, and a final assignment of one token to each vertex. The goal is to find a minimum-cost sequence of swaps of adjacent tokens to reach the final assignment from the initial assignment, where the cost is the sum over all swaps of the sum of the weights of the two swapped tokens.

\emph{Unweighted} token swapping has been extensively studied: it is NP-hard to approximate to a factor better than $14/13$, and there is a polynomial-time 4-approximation, along with a tight ``barrier'' result showing that the class of \emph{locally optimal} algorithms cannot achieve a ratio better than 4. For trees, the problem remains NP-hard to solve exactly, and there is a polynomial-time 2-approximation, along with a tight barrier result showing that the class of \emph{$\ell$-straying} algorithms cannot achieve a ratio better than 2.

\emph{Weighted} token swapping with $\{0,1\}$ weights is much harder to approximation: it is NP-hard to approximate even to a factor of $(1-\varepsilon) \cdot \ln n$ for any constant $\varepsilon>0$. Restricting to \emph{positive} weights, no approximation algorithms are known, and the only known lower bounds are those inherited directly from the unweighted version.

We provide the first approximation algorithms for weighted token swapping on both trees and general graphs, along with \emph{tight} barrier results. Letting $w$ and $W$ be the minimum and maximum token weights, our approximation ratio is $2+2W/w$ for general graphs and $1+W/w$ for trees.

\end{abstract}
\clearpage
\section{Introduction}

In the \emph{weighted token swapping} problem, we are given an $n$-vertex graph $G=(V,E)$, $n$ distinct tokens each with a non-negative weight, and two one-to-one assignments
of tokens to vertices: a starting assignment, and a destination assignment. A swap along an edge
$(u, v) \in E$ switches the locations of the tokens on vertices $u$ and $v$. The \emph{cost} of a swap is the sum of the weights of the two swapped tokens. The token swapping problem asks
for the minimum-cost sequence of swaps to arrive at the destination assignment from the starting assignment.

The \emph{unweighted} version of the problem (where the goal is to minimize the total number of swaps) has been extensively studied in both theory and practice~\cite{aicholzer:2021, MR4541302, akers:1989, MR2431751, MR1137822, MR1705338, MR1334632, YAMANAKA:2015, miltzow_et_al:LIPIcs.ESA.2016.66, MR3805577, cayley1849lxxvii,MR796304,MR1691876,MR3917574,yasui2015swapping,portier1990whitney,MR4705861,hiken2024improvedhardnessofapproximationtokenswapping,ajila2024colored,MR4036097,surynek2019multi,gourves2017object,MR1285588,MR3710080,MR1666061,banerjee2022locality, siraichi2019qubit, molavi2022qubit,bapat2023advantages,MR3964104,MR4638397,sharma2023noise,MR4594484,MR4261033,surynek2018finding}. It has found practical relevance in areas such as network engineering~\cite{akers:1989}, robot motion planning~\cite{MR4036097,surynek2019multi}, game theory~\cite{gourves2017object}, and --- in the case of the parallel variant --- qubit routing (e.g.~\cite{banerjee2022locality, siraichi2019qubit, molavi2022qubit,bapat2023advantages,MR3964104,MR4638397,sharma2023noise,MR4594484}). Algorithms and heuristics for the problem have also been experimentally evaluated~\cite{MR4261033,surynek2018finding}. In terms of theoretical results, unweighted token swapping is NP-hard to approximate to better than a $14/13$ factor~\cite{hiken2024improvedhardnessofapproximationtokenswapping,miltzow_et_al:LIPIcs.ESA.2016.66}, but admits a polynomial-time 4-approximation algorithm~\cite{miltzow_et_al:LIPIcs.ESA.2016.66}.  

See the introductions of~\cite{hiken2024improvedhardnessofapproximationtokenswapping,MR4541302,aicholzer:2021} for more background on unweighted token swapping, including parameterized algorithms and hardness, special classes of graphs, and the parallel variant. Our focus is on \emph{weighted} token swapping.

Weighted token swapping has been studied in several prior works~\cite{hiken2024improvedhardnessofapproximationtokenswapping,aicholzer:2021,MR4541302}. When the weights are only 0 and 1, weighted token swapping becomes much harder to approximate than unweighted: it is NP-hard to approximate even within a factor of $(1-\varepsilon) \cdot \ln n$ for any constant $\varepsilon>0$~\cite{hiken2024improvedhardnessofapproximationtokenswapping}. However, when the weights are restricted to be positive, nothing is known! That is, there are no known approximation algorithms, and the only known lower bounds are inherited directly from the unweighted case.

We obtain the first algorithms along with tight ``barrier'' results for weighted token swapping on both general graphs and trees. To provide context for our results, we will first survey the analogous results for \emph{unweighted} token swapping, starting with the case of trees.

\textbf{Unweighted Token Swapping on Trees.} Token swapping on trees was actually studied before token swapping on general graphs, starting with Akers and Krishnamurthy who introduced the problem in 1989 in the context of network engineering~\cite{akers:1989}. Interestingly, token swapping on trees was independently introduced three different times, and each time a polynomial-time 2-approximation was discovered. All three algorithms are different, and are known as the Happy Swap Algorithm~\cite{akers:1989}, the Cycle Algorithm~\cite{YAMANAKA:2015}, and the Vaughan-Portier Algorithm~\cite{MR1334632}. More recently, token swapping on trees was shown to be NP-complete~\cite{aicholzer:2021}. Regarding the approximation ratio of 2, a tight ``barrier'' result is known: roughly speaking, a better-than-2-approximation for trees is impossible, unless the algorithm exhibits ``strange'' behavior. Specifically, an \emph{$\ell$-straying} algorithm is defined as one in which for every input, every token stays within a distance $\ell$ from its (unique) path from starting vertex to destination vertex. Counterintuitively, any better-than-2-approximation algorithm must have a huge straying value: at least $\Omega(n^{1-\varepsilon})$ for every constant $\varepsilon>0$~\cite{aicholzer:2021}.

\textbf{Unweighted Token Swapping on General Graphs.}  Miltzow,
Narins, Okamoto, Rote, Thomas, and Uno~\cite{miltzow_et_al:LIPIcs.ESA.2016.66} showed that token swapping is APX-hard and admits a polynomial-time 4-approximation. Hiken and Wein showed that it is NP-hard to approximate to better than a $14/13$ factor~\cite{hiken2024improvedhardnessofapproximationtokenswapping}. As for barrier results, the notion of $\ell$-straying is not meaningful for general graphs because any valid algorithm must be $\Omega(n)$-straying (consider the example of a cycle where each token wants to shift over by 1). For this reason, prior work considers a different natural class of algorithms: a \emph{locally optimal} algorithm is one that never performs a swap that takes \emph{both} tokens farther from their destinations. Hiken and Wein showed the tight barrier result that no locally optimal algorithm can yield a better-than-4-approximation~\cite{hiken2024improvedhardnessofapproximationtokenswapping}. 

\subsection{Our Results}

We obtain the first approximation algorithms for weighted token swapping on both trees and general graphs, which generalize the best known results for unweighted graphs. To complement these algorithms, we show \emph{tight} barrier results analogous to those for unweighted graphs.

Our bounds are in terms of $w$ and $W$, the minimum and maximum token weights, respectively. As a baseline, any $\alpha$-approximation algorithm for \emph{unweighted} token swapping is automatically an $(\alpha\cdot W/w)$-approximation for \emph{weighted} token swapping. This is because every swap that the algorithm performs has cost at most $2W$, whereas any swap that the optimal algorithm performs has cost at least $2w$; so, if the algorithm does at most $\alpha$ times more swaps than the optimum, then the weighted cost is at most $\alpha\cdot W/w$ times the optimum. As a result, the best known unweighted algorithms imply a $2W/w$-approximation for weighted token swapping on trees, and a $4W/w$-approximation for weighted token swapping on general graphs. We show that one can do better for both trees and general graphs.

\subsubsection{Our Results for Trees}
Our first result for trees is a $(1+W/w)$-approximation algorithm:

\begin{restatable}{theorem} {treealg}\label{treealg}
There is a polynomial-time $(1+W/w)$-approximation algorithm for weighted token swapping on trees.
\end{restatable}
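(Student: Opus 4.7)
The plan is to pair a simple potential-function lower bound on $\text{OPT}$ with any of the existing unweighted 2-approximations for trees (e.g., \HappySwapAlgo{}), and to show that when the resulting swap sequence is paid for with weights it is within a $(1 + W/w)$ factor of optimal.

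First I would prove $\text{OPT} \geq \sum_t w_t d_t$, where $d_t$ denotes the (unique) tree-distance from the start of token $t$ to its destination. Let $\Phi = \sum_t w_t \cdot d(v_t, v_t^*)$ be the weighted sum of distances-to-destination, with $v_t$ the current location of $t$ and $v_t^*$ its destination. A single swap of tokens $s, t$ along an edge shifts each of their distances by $\pm 1$, so $\Phi$ drops by at most $w_s + w_t$, which is exactly the cost of that swap. Telescoping over the whole sequence gives $\text{OPT} \geq \Phi_{\text{init}} - \Phi_{\text{final}} = \sum_t w_t d_t$.

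Next I would run an unweighted 2-approximation whose total swap count is at most $\sum_t d_t$; this is exactly the upper bound that, combined with the lower bound $\sum_t d_t / 2$, gives the factor $2$ for \HappySwapAlgo{}, \CycleAlgo{}, and the Vaughan--Portier algorithm. Writing $n_t$ for the number of swaps involving token $t$ in the algorithm's output, we have $n_t \geq d_t$ (token $t$ must cross its unique path) and $\sum_t n_t = 2 \cdot (\text{number of swaps}) \leq 2\sum_t d_t$, so $\sum_t (n_t - d_t) \leq \sum_t d_t$. The weighted cost is then
\begin{align*}
\sum_t w_t n_t
&= \sum_t w_t d_t + \sum_t w_t (n_t - d_t) \\
&\leq \sum_t w_t d_t + W \sum_t (n_t - d_t) \\
&\leq \sum_t w_t d_t + W \sum_t d_t \\
&\leq \left(1 + \tfrac{W}{w}\right) \sum_t w_t d_t \leq \left(1 + \tfrac{W}{w}\right) \text{OPT},
\end{align*}
where the second step uses $w_t \leq W$ and the penultimate step uses $\sum_t d_t \leq (1/w) \sum_t w_t d_t$ from $w_t \geq w$.

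The main obstacle I expect is verifying the clean bound (number of swaps) $\leq \sum_t d_t$ for the chosen unweighted algorithm. This is implicit in the usual 2-approximation analyses---for \HappySwapAlgo{} it amounts to charging each ``unhappy'' swap to a distinct ``happy'' one, and for Vaughan--Portier it falls out of the leaf-by-leaf induction---but I would want to reproduce it carefully so that the rest of the argument is genuine bookkeeping. No new algorithmic ideas appear necessary beyond re-analyzing an off-the-shelf unweighted algorithm against the weighted potential.
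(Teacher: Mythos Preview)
Your proposal is correct and is essentially the same argument as the paper's, just repackaged. Both use the potential lower bound $\text{OPT}\ge \sum_t w_t d_t$ and the \HappySwapAlgo{} swap-count bound $\#\text{swaps}\le \sum_t d_t$, and both arrive at $\text{ALG}\le \sum_t w_t d_t + W\sum_t d_t$. The only cosmetic difference is that the paper charges each swap to one of its two moves that is ``inevitable'' (among the first $d_t$ moves of the corresponding token), whereas you aggregate per token into $n_t = d_t + (n_t-d_t)$ and bound the excess by $W$; your version is slightly more modular in that it only needs the global bound $\sum_t n_t \le 2\sum_t d_t$ together with $n_t\ge d_t$, while the paper re-derives the swap-count bound via its Lemmas on inevitable moves.
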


Note that by setting $W=w$ to recover the unweighted case, this approximation ratio is consistent with the best known approximation ratio of 2.\\

We complement our algorithm with essentially the strongest possible barrier result for $\ell$-straying algorithms (defined above):

\begin{restatable}{theorem} {treelb}\label{treelb}
For any constants $\varepsilon,\delta>0$, there does not exist an $O(n^{1-\varepsilon})$-straying algorithm that provides a $(1+W/w-\delta)$-approximation for weighted token swapping on trees.
\end{restatable}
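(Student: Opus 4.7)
The plan is to adapt the unweighted $\ell$-straying lower bound of~\cite{aicholzer:2021} to the weighted setting, constructing a family of tree instances $\{T_n\}$ on which OPT has cost $C^*$ but every $O(n^{1-\varepsilon})$-straying algorithm has cost at least $(1+W/w-\delta)C^*$. The key new ingredient is to introduce heavy ``obstacle'' tokens (weight $W$) on the ``spine'' of the tree and light ``carrier'' tokens (weight $w$) in a deep ``pocket'' branch of depth $\omega(n^{1-\varepsilon})$, arranged so that OPT can use the light tokens to perform the required token rearrangements cheaply (heavy--light swaps of cost $W+w$, or light--light of cost $2w$), while any $\ell$-straying algorithm is forced to accomplish the same rearrangements using costly heavy--heavy swaps (cost $2W$) on the spine.

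I will exhibit an explicit OPT-like algorithm whose cost is essentially the weighted-distance lower bound $\sum_i w_i d_i$ by routing rearrangements through the deep pocket: the light carriers absorb the displacement cost, and each movement of a heavy token is essentially forced by its own destination. By contrast, an $O(n^{1-\varepsilon})$-straying algorithm cannot push a heavy spine token far enough into the pocket to use it as a carrier, and so must realize each required ``pass'' of heavy tokens via a heavy--heavy swap on the spine. With parameters tuned so the number of required passes is $\Theta(n)$, OPT comes out to $\Theta(wn)$ while the $\ell$-straying cost is $\Theta(Wn) + \Theta(wn) = \Theta((W+w)n)$, giving ratio $1+W/w-o(1)$. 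Setting $W=w$ should recover the unweighted bound of~\cite{aicholzer:2021}, which is a useful sanity check.

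The main obstacle is the $\ell$-straying lower bound itself: we must rule out \emph{every} clever strategy, including those that reuse shallow pocket parking across multiple passes or find cheap workarounds via heavy reshuffling. I expect to handle this via a potential-function / charging argument. Define a potential $\Phi$ counting the number of unresolved heavy--heavy ``inversions'' on the spine, observe that $\Phi$ starts at $\Theta(n)$ and must reach $0$, and argue that each $\Phi$-decreasing swap available to an $\ell$-straying algorithm either costs at least $2W$ (a heavy--heavy spine swap) or requires straying larger than $\ell$ (deep pocket usage, which is disallowed). Summing over the $\Theta(n)$ inversions yields aggregate cost $\Omega(Wn)$, which combined with OPT of order $wn$ produces the claimed gap. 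The most delicate step will be the case analysis showing that no alternative --- shared straying budgets across passes, partial pocket use, or intricate light-token shuffling on the spine --- can circumvent the heavy--heavy cost.
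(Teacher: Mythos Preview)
Your plan has a genuine internal inconsistency that would prevent it from going through as stated. You want OPT to be $\Theta(wn)$, yet your lower-bound mechanism is a potential $\Phi$ counting $\Theta(n)$ unresolved \emph{heavy--heavy} inversions on the spine. But if the heavy tokens have $\Theta(n)$ inversions among themselves, then by the trivial weighted-distance bound $\textrm{OPT}\ge \sum_t \omega(t)\,d(t)$ the heavy tokens alone already contribute $\Omega(Wn)$ to OPT, so OPT cannot be $\Theta(wn)$ and the ratio collapses. Conversely, if the heavy tokens are genuinely ``obstacles'' (i.e.\ happy, with $d(t)=0$), then there are no heavy--heavy inversions at all and your potential $\Phi$ is identically $0$. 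Either way the charging argument you sketch does not fit the instance you need. The related claim that a straying algorithm is forced into heavy--heavy swaps of cost $2W$ is also off: in the construction that actually yields ratio $1+W/w$, the expensive swaps are \emph{light--heavy} (cost $W+w$), not heavy--heavy, which is exactly what gives $(W+w)/w$ rather than $2W/w$.

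The construction that works is simpler than what you outline and needs no potential function. Take a path of length $\ell=\Theta(n^{1-\varepsilon/2})$ carrying heavy tokens that are all \emph{happy}, and attach a star of $N=\Theta(n)$ light leaves at each end; the light tokens on the left star want to go to the right star and vice versa. Because each heavy token $t$ has $p(t)=\{v_f(t)\}$, an $O(n^{1-\varepsilon})$-straying algorithm can move it at most $O(n^{1-\varepsilon})=o(\ell)$ steps, so $\ell-o(\ell)$ heavy tokens are pinned to the path; every light token crossing the path then incurs $\ell-o(\ell)$ swaps of cost $W+w$, giving $\textrm{ALG}\ge 2N\ell(W+w)(1-o(1))$. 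For OPT, first park all $\ell$ heavy tokens into $\ell$ of the adjacent leaves (cost $O(\ell^2(W+w))=o(N\ell)$), let the $2N$ light tokens cross via light--light swaps (cost $\sim 2N\ell w$), then restore the heavy tokens. The ratio is $(W+w)/w - o(1)=1+W/w-o(1)$. The straying lower bound here is a one-line counting argument, not a potential/charging scheme; your ``deep pocket'' and inversion machinery are unnecessary and, as set up, would not deliver the bound.
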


In contrast, our algorithm that achieves a $(1+W/w)$-approximation is only 1-straying.

\subsubsection{Our Results for General Graphs}

We obtain a $(2+2W/w)$-approximation for general graphs:

\begin{restatable}{theorem} {genalg}\label{genalg}
There is a polynomial-time $(2+2W/w)$-approximation algorithm for weighted token swapping on general graphs.
\end{restatable}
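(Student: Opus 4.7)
The plan is to combine a structural lower bound on $OPT$ with an algorithm that adapts the unweighted $4$-approximation of Miltzow et al.~\cite{miltzow_et_al:LIPIcs.ESA.2016.66} to the weighted setting. Let $d(t)$ denote the distance in $G$ from the starting vertex of token $t$ to its destination vertex. The lower bound is immediate: each token $t$ must participate in at least $d(t)$ swaps (one per edge it traverses), each contributing $w_t$ to the total cost, giving
\[
OPT \;\geq\; \sum_t w_t\,d(t) \;\geq\; w\sum_t d(t).
\]

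For the upper bound, let $t^*$ denote the minimum-weight token (weight $w$). My target is an algorithm that produces a swap sequence of total length at most $2\sum_t d(t)$ in which every swap has cost at most $w+W$. The algorithmic blueprint mirrors the unweighted approach --- route tokens along shortest paths in $G$, incurring controlled displacement of others --- but modified so that $t^*$ is one of the two participants in each swap, effectively serving as a \emph{universal ferry} that absorbs the expensive half of every swap into weight $w$. Once both bounds hold, the analysis closes:
\[
\mathrm{cost} \;\leq\; (w+W)\cdot 2\sum_t d(t) \;\leq\; \frac{2(w+W)}{w}\,OPT \;=\; \left(2+\frac{2W}{w}\right) OPT.
\]

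The main obstacle is simultaneously enforcing the two properties of the algorithm: involving $t^*$ in every swap while keeping the total swap count at $O(\sum_t d(t))$. A pure pebble-motion reduction (treating $t^*$ as a hole) can force $\Omega(n)$ extra swaps per unit of progress when $t^*$ starts far from the tokens that need moving, so a naive reduction is too lossy. Overcoming this likely requires either a careful amortized analysis that charges ``repositioning'' swaps of $t^*$ against the later forward progress of the tokens $t^*$ assists, or else relaxing the ``$t^*$-in-every-swap'' requirement to also permit ``double-happy'' swaps between two heavy tokens --- each strictly decreasing its distance to destination --- whose cost at most $2W$ is absorbed because they produce forward progress for both tokens simultaneously. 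The factor-of-$2$ overhead over the tree ratio $1+W/w$ should then mirror the corresponding factor-of-$2$ gap between tree and general-graph ratios in the unweighted setting, where the same ``bounce-back'' phenomena on non-tree shortest paths force each unit of intended forward motion to be accompanied by at most one unit of displacement to repair.
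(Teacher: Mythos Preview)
Your proposal is not a proof; it is a plan in which you yourself flag the central obstacle and leave it unresolved. The gap is real, and your two suggested fixes do not close it.

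The ``global ferry'' idea --- insisting that a single minimum-weight token $t^*$ participate in every swap while keeping the total swap count at $2\sum_t d(t)$ --- is impossible in general. Take two vertex-disjoint $3$-cycles of the permutation, each living on a small cluster of vertices, with the two clusters joined by a path of length $L$ whose tokens are all already home. Then $\sum_t d(t)$ is a constant (the cycle distances), but to assist the far cluster your $t^*$ must traverse the long path and return, costing $\Theta(L)$ swaps. No amortization can charge these $\Theta(L)$ swaps to a bounded $\sum_t d(t)$; the repositioning cost is simply not accounted for in your lower bound. Your proposed relaxation --- allowing double-happy swaps between heavy tokens in addition to $t^*$-swaps --- steers you toward the weighted analogue of the Happy Swap Algorithm on general graphs, which the paper explicitly discusses and abandons: once heavy tokens are permitted to displace other heavy tokens, the bounce-back repairs can occur far later in time and involve entirely different (heavier) tokens, and there is no clean charging argument.

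The paper's resolution is to drop the global ferry altogether. It decomposes the permutation $\pi$ into cycles and, \emph{within each cycle}, picks that cycle's lightest token to do the back-and-forth work. Because the light token is already a member of its cycle, there is zero repositioning overhead. Moreover, the algorithm does not force the light token into every swap: when $t_{j,k}$ walks its shortest path to its destination, it swaps with arbitrary intermediate tokens (cost bounded crudely by $W$ each) and then the cycle-minimum token retraces the same path to undo those displacements. The analysis splits cleanly: the $t_{j,k}$-versus-cycle-minimum portion is at most $2\omega(t_{j,k})d(t_{j,k})$ and sums to $2\,OPT$, while the intermediate-token portion is at most $2W\sum_t d(t)\le (2W/w)\,OPT$. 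The per-cycle minimum is the idea you are missing.
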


Note that by setting $W=w$ to recover the unweighted case, this approximation ratio is consistent with the best known approximation ratio of 4.\\

Following prior work, it would be natural to show a barrier result for the class of locally optimal algorithms (defined above). However, there is a caveat: there is a known token swapping algorithm that is \emph{not} locally optimal. It is a 4-approximation algorithm for unweighted token swapping from  the appendix of~\cite{hiken2024improvedhardnessofapproximationtokenswapping} that was presented as an alternative to the known 4-approximation~\cite{miltzow_et_al:LIPIcs.ESA.2016.66}. To obtain a more robust barrier, we present a generalized definition of locally optimal that encompasses all known token swapping algorithms:

A token swapping  algorithm is \emph{generalized locally optimal} if every swap takes at least one of the two participating tokens closer to either (a) its destination vertex or (b) its starting vertex. (Locally optimal algorithms omit condition (b) and are thus a strict subset of generalized locally optimal algorithms.)

Generalized locally optimal algorithms are a large and natural class of algorithms, however, one may initially wonder why it would be natural to move a token closer to its starting vertex. The intuition is that it is useful to allow a token $t$ to sit on its starting vertex, get pushed off that vertex by a token in transit, and then return to its starting vertex, repeatedly, until it is $t$'s ``turn'' to move towards its destination. In this scenario, all of the swaps that move $t$ back onto its starting vertex are permitted by a generalized locally optimal algorithm.

We prove a barrier result for generalized locally optimal algorithms that is \emph{tight} with our algorithm from \cref{genalg}.

\begin{restatable}{theorem} {genlb}\label{genlb}
For any constant $\delta>0$, there does not exist a generalized locally optimal algorithm that provides a $(2+2W/w-\delta)$-approximation for weighted token swapping on general graphs.
\end{restatable}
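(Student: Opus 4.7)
The approach is to construct a family of weighted token-swapping instances on which any generalized locally optimal algorithm incurs cost at least $(2+2W/w-o(1)) \cdot \mathrm{OPT}$. The plan is to generalize the tight unweighted $(4-\delta)$-barrier construction for locally optimal algorithms from prior work, then ``layer'' weights on top of that construction so that only the algorithm's swaps (and not $\mathrm{OPT}$'s) are forced to interact with heavy tokens.

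First, I would recall (and if necessary adapt) the unweighted barrier construction: a graph and token configuration in which any locally optimal algorithm performs at least approximately $4n$ swaps, while $\mathrm{OPT}$ performs roughly $n$. The key structural feature I need to extract from it is an asymmetry between a ``short'' transit route that every locally optimal algorithm is combinatorially forced to follow and a ``long'' detour that $\mathrm{OPT}$ exploits.

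Second, I would lift this to the weighted setting by assigning weight $W$ to a small set of tokens placed along the short route, each with start vertex equal to destination vertex, and weight $w$ to every other token. With this placement, $\mathrm{OPT}$ routes all movement along the detour, performing only light-light swaps of cost $2w$, for total cost $\approx 2nw$. Any generalized locally optimal algorithm, by the same combinatorial obstruction as in the unweighted case, is steered onto the short route; every swap it performs there is heavy-light (cost $\ge W+w$) or heavy-heavy (cost $2W$), and it still performs at least $\approx 4n$ such swaps. The total algorithmic cost is then at least $4n(W+w)-o(nW)$, giving ratio $\tfrac{4n(W+w)}{2nw} = 2 + 2W/w$ in the limit, as required.

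The main obstacle is checking that the ``generalized'' freedom (swaps that move a token closer to its start) does not give the algorithm a way to bypass the short route and thereby avoid the heavy tokens. For heavy tokens this is automatic: since their start equals their destination, every move toward the start is also a move toward the destination, so the generalized condition collapses to ordinary local optimality on heavy-token swaps. For light tokens on the detour, additional care is needed; the starts and destinations along the detour must be arranged so that ``backward'' motion toward a light token's own start cannot be chained into net progress on the overall rearrangement, blocking any would-be shortcut. Verifying that this loophole can be closed --- for instance by choosing the light-token permutation to be a single long rotation along the detour, so that no sequence of backward-toward-start moves advances the global configuration --- is the crux of the proof.
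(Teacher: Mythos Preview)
Your high-level strategy matches the paper's: take the unweighted $4$-barrier construction of \cite{hiken2024improvedhardnessofapproximationtokenswapping} and layer weights so that $\mathrm{OPT}$ performs only light-light swaps while any generalized locally optimal algorithm is forced through heavy tokens. The paper assigns weight $w$ to the ``outer tokens'' (those on the outer cycle $C^{out}$, the route $\mathrm{OPT}$ uses) and weight $W$ to the ``inner tokens'' (which already sit at their destinations, exactly as you guessed), and the target arithmetic $\tfrac{4\,\mathrm{OPT}_\text{swaps}\cdot(W+w)}{2\,\mathrm{OPT}_\text{swaps}\cdot w}=2+2W/w$ is the same.

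Two details in your plan do not go through as stated. First, your claim that every algorithm swap is heavy-light or heavy-heavy is false: each inner cycle carries $p/2$ light tokens, so up to $\binom{p/2}{2}$ light-light swaps can occur there. The paper bounds these explicitly (any pair of equal-weight tokens swaps at most once, else delete both swaps) and subtracts the resulting $O(p^2)$ term from the unweighted per-cycle lower bound of $2pq-O(p+q)$; taking $p=o(q)$ makes it lower order. Second---and this is the part you flag as ``the crux''---your proposed fix for the generalized loophole on light tokens (arrange them as a single long rotation so backward-toward-start moves cannot chain into global progress) is not what works and is not obviously sound. In the actual construction the light tokens live at the \emph{intersections} of $C^{out}$ with the inner cycles, so every light token's start and destination both lie on the same inner cycle $C_j^{in}$. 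Then any swap along an edge of $C^{out}$ moves a token off its own inner cycle, strictly increasing its distance to \emph{both} its start and its destination simultaneously; such swaps are therefore forbidden for generalized locally optimal algorithms outright, with no rotation or chaining argument needed. The geometry of the construction---which your proposal leaves unspecified---is doing all the work here.
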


Additionally, as shown in \cref{sec:opt}, our $(2+2W/w)$-approximation algorithm is generalized locally optimal (as are all previously known token swapping algorithms).

\subsection{Remarks about our Proofs}

\paragraph{Simplicity.} Three of our four proofs are self-contained and simple. The fourth (\cref{genlb}) is also quite simple but relies on a 3-page proof from prior work~\cite{hiken2024improvedhardnessofapproximationtokenswapping}. Additionally, the algorithms for \cref{treealg,genalg} are both simple greedy algorithms.

Furthermore, \cref{treelb} is a simplification of prior work. Specifically, the unweighted construction for the $O(n^{1-\varepsilon})$-straying barrier result from prior work is more involved than ours; ours is simply a path with two stars on each end (\cref{fig:straying-approximation}). This subsumes the result of prior work~\cite{aicholzer:2021} and extends it to the weighted setting, while simplifying it. 

\paragraph{Comparison of our Algorithms to Known Unweighted Algorithms.} Recall that for trees, there are three known 2-approximation algorithms: the Happy Swap Algorithm~\cite{akers:1989} (see also~\cite{miltzow_et_al:LIPIcs.ESA.2016.66}), the Cycle Algorithm~\cite{YAMANAKA:2015}, and the Vaughan-Portier Algorithm~\cite{MR1334632}. The Vaughan-Portier Algorithms is similar-in-spirit to the Happy Swap Algorithm, and we will restrict our discussion to the Happy Swap and Cycle Algorithms. 

Our algorithm for trees is, perhaps surprisingly, just the Happy Swap Algorithm with no modifications whatsoever. One simply runs the algorithm while ignoring the weights of the tokens. This, together with the barrier result, shows that to improve the approximation ratio below our bound of $1+W/w$, one cannot use any tricks like prioritizing swaps of tokens of certain weight, without also making a radical change to the structure of the algorithm that renders it $\Omega(n^{1-\varepsilon})$-straying.

Turning our attention to general graphs, there are known extensions of both the Happy Swap Algorithm~\cite{miltzow_et_al:LIPIcs.ESA.2016.66} and the Cycle Algorithm~\cite{hiken2024improvedhardnessofapproximationtokenswapping} that are both 4-approximations. Given our tree result, one may naturally expect that the extension of the Happy Swap Algorithm would yield a good algorithm for general graphs. This is not what we show, however. We briefly describe why this algorithm seems not to admit a straightforward analysis in the weighted setting: There could be situation where a token $t$ is on its destination vertex, then moved off its destination, and then later becomes part of a cycle of tokens that all want to shift over by one. To resolve this cycle, the extension of the Happy Swap Algorithm chooses one token to go all the way around the cycle. One natural way to bound the approximation ratio would be to establish a relationship between the weight of the token that moved $t$ off of its destination, and the weight of the token that goes around the cycle to put $t$ back onto its destination. However, these two actions could happen at completely different times in the algorithm, making them difficult to compare. In particular, the troublesome case is when the original token that moved $t$ is light, while the eventual cycle is composed of only heavy tokens.

Luckily, the extension of the Cycle Algorithm for general graphs does not suffer from this drawback. Instead, it is more ``temporally local'' in that when a token is moved from its destination, it is quickly returned. For this reason, the Cycle Algorithm for general graphs is simpler to analyze than the Happy Swap Algorithm for general graphs, for weighted token swapping. However, the Cycle Algorithm for general graphs does not work straight out of the box, and requires a minor tweak: while the original algorithm picks an arbitrary token to go around a given cycle, our variant picks the smallest-weight token on the cycle. 

Contextually, the Cycle Algorithm for general graphs was introduced recently in the appendix of~\cite{hiken2024improvedhardnessofapproximationtokenswapping} as an equally attractive alternative to the Happy Swap Algorithm for general graphs. However, we show that it has the unintended consequence of lending itself particularly nicely to the weighted setting.

\section{Preliminaries}
Throughout this paper, we adhere to the following notations and assumptions: A \TokenSwapping\ instance consists of parameters \TokenSwapping{$(G(V,E), T, v_0, v_f)$}. $G(V,E)$ is the underlying undirected simply-connected graph with vertex set $V$, edge set $E$, and we let $n = |V|$. We denote $T$ as the set of tokens such that there is exactly one token on each vertex during the execution of an algorithm for \TokenSwapping, henceforth $|T| = |V| = n$. We also have two bijections $v_0, v_f: T \xrightarrow{} V$, such that for any $t\in T$, $v_0(t)$ and $v_f(t)$ are the starting vertex and the destination vertex of token $t$, respectively. A \WeightedTokenSwapping\  instance refers to \WeightedTokenSwapping{$(G(V,E), T, v_0, v_f, \omega)$} which contains all the parameters in \TokenSwapping\ as well as a weighting function $\omega : T \rightarrow \mathbb{R}_{> 0}$. Denote $w = \min\limits_{t\in T}\omega(t)$, $W = \max\limits_{t\in T}\omega(t)$ and $v: T \xrightarrow{} V$ as the bijection where for each $t \in T$, $v(t)$ is the vertex that token $t$ currently occupies, and let $d(v_1, v_2)$ be the distance between vertices $v_1$ and $v_2$, let $d(t_1, t_2)$ be a shorthand for $d(v(t_1), v(t_2))$. 

In both \TokenSwapping\ and \WeightedTokenSwapping, a swap on incident tokens $(t_1, t_2)$ is said to be \textbf{valid} if $(v(t_1), v(t_2)) \in E$, and the \textbf{cost} for a valid swap on tokens $(t_1, t_2)$ in \WeightedTokenSwapping\ is $\omega(t_1) + \omega(t_2)$.  The goal for \TokenSwapping\ is to find a sequence of valid swaps, starting from the configuration $v(t) = v_0(t)$ for all $t \in T$ and ending in the configuration $v(t) = v_f(t)$ for all $t\in T$, such that the number of swaps in the sequence achieves minimum among all such swap sequences.\WeightedTokenSwapping\ is the same except the goal is to minimize the total cost of the sequence of swaps. We will denote OPT as the optimal cost of \TokenSwapping\ or \WeightedTokenSwapping\ , depending on the context, and ALG as the cost of the algorithm that is being discussed to solve \TokenSwapping\  or \WeightedTokenSwapping, depending on the context.

\section{Results for Trees}

 First, we study the approximation algorithms of \WeightedTokenSwapping\  on \textbf{trees}. 
 For any token $t \in T$, we denote $p(t)$ as the unique path from $v_0(t)$ to $v_f(t)$, 
 and let $d(t) := d(v_0(t), v_f(t))$ be the distance between these two vertices. 
\subsection{A polynomial time ($1 + \frac{W}{w}$)-approx. algorithm}

In this section we will prove the following theorem:

\treealg*


\subsubsection{Algorithm}
The algorithm that realizes \cref{treealg} is simply the known \textbf{Happy Swap Algorithm}~\cite{akers:1989}. For convenience, we specify \HappySwapAlgo\ below as well as its related definitions; for a proof of the correctness of \HappySwapAlgo, we refer the readers to \cite{miltzow_et_al:LIPIcs.ESA.2016.66}. 

Following the notation of \cite{MR4541302}, we say that a token $t \in T$ is \textbf{happy} if $v(t) = v_f(t)$, i.e.~$t$ is at its destination. A swap on tokens $(t_1, t_2)$ is said to be \textbf{happy} if $d(v_f(t_1), v(t_2)) = d(v_f(t_1), v(t_1)) - 1$ and $d(v_f(t_2), v(t_1)) = d(v_f(t_2), v(t_2)) - 1$, in other words, the swap brings both of the tokens closer to their destinations. A swap on tokens $(t_1, t_2)$ is said to be a \textbf{shove} if one of them is happy, say $v(t_1) = v_f(t_1)$, and the other token decrements the distance to its destination, i.e., $d(v_f(t_2), v(t_1)) = d(v_f(t_2), v(t_2)) - 1$.

To run the \HappySwapAlgo\  on an instance of \WeightedTokenSwapping, we simply run the algorithm as is, ignoring the weights of the tokens.

\begin{algorithm}[H]
    \caption{Happy Swap Algorithm}
\begin{algorithmic}[1]
\Function{\HappySwapAlgo}{\TokenSwapping($G(V,E), T, v_0, v_f$)}:
\While{there exists a token $t \in T$ such that $v(t) \neq v_f(t)$}
\If{there exists a happy swap \textbf{or} there exists a shove on a pair of tokens $(t_1, t_2)$}
    \State perform a swap on $(t_1, t_2)$
\EndIf
\EndWhile
\EndFunction
\end{algorithmic}
\end{algorithm}


Prior work has shown that the \HappySwapAlgo\ runs in polynomial time and correctly solves \TokenSwapping, see \cite{akers:1989, miltzow_et_al:LIPIcs.ESA.2016.66}. Thus, it remains to show that it is a ($1 + \frac{W}{w}$)-approx.

\subsubsection{Approximation Analysis}

 A \textbf{move} is a token-vertex pair $(t, v)$ which indicates the operation of moving token $t$ into vertex $v$, and we will use the shorthand $t \xrightarrow{} v$ for notational convenience. Now we focus on an individual swap on a pair of tokens $(t_1, t_2)$ incurred by \HappySwapAlgo. Note that there are two moves $t_1 \rightarrow v(t_2)$ and $t_2 \rightarrow v(t_1)$ in the swap along the edge $(v(t_1), v(t_2)) \in E$. We say a move $t \xrightarrow{} v$ is \textbf{inevitable} if $t \xrightarrow{} v$ occurs in the first $d(t)$ moves on token $t$, otherwise the move $t \xrightarrow{} v$ is \textbf{redundant}.

\begin{lemma}
\label{cannot_stray}
     If a move $t \xrightarrow{} v$ is inevitable, then $(v(t),v) \in p(t)$. In other words, inevitable moves cannot cause the token $t$ to stray away from the path $p(t)$. 
\end{lemma}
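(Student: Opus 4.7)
The plan is to proceed by induction on the number of moves $k$ that have been performed on token $t$ so far, showing that for all $k < d(t)$: (a) token $t$ currently lies on $p(t)$ at distance $d(t)-k$ from $v_f(t)$, and (b) the $(k+1)$-st move on $t$ traverses an edge of $p(t)$. This directly yields the lemma, since by definition every inevitable move on $t$ is among the first $d(t)$ moves on $t$.

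First I would classify the possible moves on $t$. Because \HappySwapAlgo\ performs only happy swaps and shoves, any single move on $t$ falls into exactly one of three types: (i) $t$ participates in a happy swap, so the move brings $t$ strictly closer to $v_f(t)$; (ii) $t$ is the non-happy partner in a shove, so again the move brings $t$ strictly closer to $v_f(t)$; (iii) $t$ is the happy partner in a shove, meaning $v(t)=v_f(t)$ before the swap, so $t$ is pushed off its destination. The key observation is that type (iii) moves cannot appear among the first $d(t)$ moves on $t$: reaching $v_f(t)$ from $v_0(t)$ requires traversing at least $d(v_0(t),v_f(t))=d(t)$ edges, so $t$ cannot be happy until it has already made at least $d(t)$ moves. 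Hence every inevitable move on $t$ is of type (i) or (ii), and in particular strictly decreases $d(v(t),v_f(t))$.

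With this classification in hand, the induction is immediate. The base case $k=0$ is trivial: $t$ sits at $v_0(t)\in p(t)$ at distance $d(t)$ from $v_f(t)$. For the inductive step, suppose $t$ is currently at some $u\in p(t)$ with $d(u,v_f(t))=d(t)-k$ for some $k<d(t)$. The $(k+1)$-st move on $t$ is of type (i) or (ii), so it sends $t$ to a neighbor $v$ of $u$ with $d(v,v_f(t))=d(u,v_f(t))-1$. Since $G$ is a tree and $u\neq v_f(t)$, the vertex $u$ has a unique neighbor strictly closer to $v_f(t)$, namely the successor of $u$ along the path toward $v_f(t)$; because $u\in p(t)$, this successor is exactly the next vertex of $p(t)$. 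Thus $v\in p(t)$ and $(u,v)\in p(t)$, completing the induction.

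The only place the argument requires any care is ruling out type (iii) during the first $d(t)$ moves on $t$, and that follows from the elementary distance lower bound above; once this is established, the tree structure forces each subsequent inevitable move to lie on $p(t)$, so I do not anticipate further obstacles.
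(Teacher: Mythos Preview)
Your proof is correct and follows essentially the same reasoning as the paper: the key observation in both is that $t$ cannot be the happy partner in a shove during its first $d(t)$ moves, since reaching $v_f(t)$ requires at least $d(t)$ moves, and hence every inevitable move must bring $t$ strictly closer to $v_f(t)$ along $p(t)$. The only cosmetic difference is that you package this as an explicit forward induction (tracking the exact position of $t$ after $k$ moves), whereas the paper argues by contradiction on a first offending move; the underlying content is the same.
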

\begin{proof}
    Suppose $p(t)$ has the form $(v_0(t) = v_0, v_1, \cdots, v_{d(t)} = v_f(t))$, and for the sake of contradiction, assume some inevitable swap $s$ on token $t$ moves it to some other vertex $v' \not \in p(t)$ from some $v_i$ where $0 \leq i < d(t)$ (note that $i \neq d(t)$ by the definition of being inevitable, since $t$ must have been moved at least $d(t)$ times to reach $v_{d(t)}$).

Since swap $s$ moves $t$ off of $p(t)$, swap $s$ does not move $t$ closer to its destination $v_f(t)$. Since happy swaps  move both tokens closer to their destinations, and shoves move one token closer to its destination, this means that swap $s$ must have been a shove, with $t$ as the initially happy token. However, $t$ is not happy since $0 \leq i < d(t)$.
\end{proof}

\begin{corollary}
\label{distance_one}
    If a move of token $t$ is redundant, then it can only have the form $t \rightarrow v'$ where $d(v', v_f(t)) \leq 1$.
\end{corollary}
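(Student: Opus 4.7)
The plan is to combine \cref{cannot_stray} with a simple invariant about where token $t$ sits after each of its first $d(t)$ moves, and then propagate that invariant through all subsequent (redundant) moves. The key observation I would establish first is that every inevitable move on $t$ not only keeps $t$ on $p(t)$ but actually advances $t$ one step along $p(t)$ toward $v_f(t)$. Indeed, while $t$ is among the first $d(t)$ moves, $t$ has not yet reached $v_f(t)$ and is therefore not happy; so in any swap involving $t$, whether a happy swap or a shove, $t$ must play the role of the token that strictly decreases its distance to $v_f(t)$. Combined with \cref{cannot_stray}, this forces the $k$-th move on $t$ (for $1 \le k \le d(t)$) to take $t$ from $v_{k-1}$ to $v_k$ along the path $p(t) = (v_0,v_1,\dots,v_{d(t)})$. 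In particular, immediately after the $d(t)$-th move, $t$ sits at $v_f(t)$.

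Next I would set up the induction on redundant moves. Let the $(d(t)+j)$-th move on $t$ for $j \ge 1$ be the $j$-th redundant move. The invariant I want to maintain is: right before each redundant move, $t$ is located at a vertex $u$ with $d(u,v_f(t)) \le 1$; equivalently, $t$ is either on $v_f(t)$ or on a neighbor of $v_f(t)$. The base case $j = 1$ follows from the previous paragraph, since $t$ is at $v_f(t)$. For the inductive step, I split on the two possibilities. If $t$ is currently at $v_f(t)$ (distance $0$), the next swap moves $t$ to some adjacent vertex, which is at distance exactly $1$ from $v_f(t)$, so the destination of the move satisfies $d(v',v_f(t)) = 1$. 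If instead $t$ is at some neighbor of $v_f(t)$ (distance $1$), then $t$ is not happy, so by exactly the same happy-swap/shove dichotomy used in step~1, the move must strictly decrease $d(v(t),v_f(t))$; the only possibility is that $t$ moves to $v_f(t)$, giving $d(v',v_f(t)) = 0$.

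In both cases of the inductive step the invariant is preserved, and the move itself is of the form $t \to v'$ with $d(v',v_f(t)) \le 1$, which is precisely what the corollary asserts. I do not anticipate a serious obstacle: the only subtlety is verifying that $t$ being non-happy forces \emph{every} allowed swap (happy or shove) to bring $t$ strictly closer to $v_f(t)$, which is already implicit in the argument inside \cref{cannot_stray} and so can be quoted cleanly.
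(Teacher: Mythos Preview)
Your proposal is correct and follows essentially the same approach as the paper: both rely on the key observation that a happy swap or shove can only move a token farther from its destination when that token was happy just before the swap, which forces the first $d(t)$ moves to land $t$ on $v_f(t)$ and all subsequent moves to oscillate between $v_f(t)$ and its neighbors. The paper compresses this into a one-line remark that \HappySwapAlgo\ is $1$-straying, while you spell out the induction explicitly; the underlying argument is the same.
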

\begin{proof}
    Since the first $d(t)$ moves (inevitable moves) on token $t$ cannot make $t$ leave the path $p(t)$ by Lemma \ref{cannot_stray}, and observe that \HappySwapAlgo\  is 1-straying since the only way a happy swap or shove can move a token farther from its destination is if $t$ was happy right before the swap, we know that a redundant move on token $t$ can only occur on some edge that is connected with $v_f(t)$.
\end{proof}

\begin{lemma}
\label{upper_bound}
    At least one of the two moves of any swap incurred by \HappySwapAlgo\  must be inevitable.
\end{lemma}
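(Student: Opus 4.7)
The plan is to assume for contradiction that both moves of some swap between tokens $t_1$ and $t_2$ performed by \HappySwapAlgo\ are redundant, and derive a contradiction. The key structural observation is that every swap the algorithm performs (happy swap or shove) moves each unhappy participant closer to its destination, so an unhappy token never moves away from $v_f(\cdot)$. It follows that the distance $d(v(t), v_f(t))$ over time is monotonically decreasing to 0 across the first $d(t)$ moves and thereafter oscillates between 0 and 1 (a shove pushes $t$ away from $v_f(t)$ only when $t$ is at $v_f(t)$, and the next move on $t$ must immediately bring it back). In particular, any token that has undergone $k \geq d(t)$ moves is currently at distance 0 or 1 from $v_f(t)$.

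I first handle the shove case. Say $t_1$ is the happy participant, so $v(t_1) = v_f(t_1)$ and $t_2$ is unhappy. Since $t_2$'s move is redundant and $t_2$ is unhappy, the distance bound forces $d(v(t_2), v_f(t_2)) = 1$ exactly. The shove moves $t_2$ from $v(t_2)$ to $v_f(t_1)$, and this must decrease $d(\cdot, v_f(t_2))$ by 1, yielding $d(v_f(t_1), v_f(t_2)) = 0$. Since $v_f$ is a bijection, this forces $t_1 = t_2$, a contradiction.

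The happy swap case is more involved. Both tokens are unhappy, so combining $d(v(t_i), v_f(t_i)) \geq 1$ with the distance-$\leq 1$ bound forces $d(v(t_i), v_f(t_i)) = 1$ exactly for $i = 1, 2$. The definition of a happy swap then gives $v(t_1) = v_f(t_2)$ and $v(t_2) = v_f(t_1)$, with these two vertices adjacent. Each $t_i$ must have $k_i > d(t_i)$ (otherwise $t_i$ would be at $v_f(t_i)$, contradicting unhappiness), so each is in the oscillation phase already before its previous move. In this phase the previous position of $t_i$ must be at distance 0 (since current distance is 1 and oscillation-phase distances are only 0 or 1), i.e., at $v_f(t_i)$; hence the previous move of $t_i$ was a shove taking $t_i$ from $v_f(t_i)$ to $v_f(t_{3-i})$.

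Finally, I compare the timings of these two shoves. They must be distinct swaps, since a swap between two happy tokens is neither a happy swap nor a shove and so is never performed by \HappySwapAlgo. Assume WLOG that $t_1$'s shove occurs first; then between that shove and the current swap, $t_1$ does not move (by definition of ``previous move'') and remains at $v_f(t_2)$. But $t_2$'s shove occurs during this interval, and just before it $t_2$ is (happy) at $v_f(t_2)$, simultaneously with $t_1$ sitting at the same vertex—a contradiction. The main conceptual obstacle is localized to this happy swap case: the shove case reduces to a one-line distance computation, whereas the happy swap case requires reconstructing the prior histories of both tokens and exhibiting that they cannot coexist.
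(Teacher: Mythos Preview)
Your proof is correct and follows essentially the same approach as the paper's: both split into the shove and happy-swap cases, both pin down $v(t_1)=v_f(t_2)$ and $v(t_2)=v_f(t_1)$ in the latter, and both finish by comparing the timings of the two prior shoves to exhibit two tokens simultaneously occupying the same vertex. The only cosmetic difference is that you derive the distance-$\leq 1$ bound for redundant tokens directly from the monotonicity observation ``unhappy tokens always move closer,'' whereas the paper reaches the same conclusion via the path-based Corollary~\ref{distance_one}.
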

\begin{proof}
Suppose $t_1$ is on vertex $v_1$, $t_2$ is on vertex $v_2$, and $t_1, t_2$ are about to swap with each other.\\

    \textbf{Case I:} The swap on $(t_1, t_2)$ is a happy swap.\\
    For the sake of contradiction, suppose both of the moves $t_1 \rightarrow v_2$ and $t_2 \rightarrow v_1$ are redundant. By the definition of happy swap, we know $v_1 \neq v_f(t_1)$ and $v_2 \neq v_f(t_2)$, i.e.~neither token is at its destination right before the swap. Henceforth, according to Corollary \ref{distance_one}, we know $v_2 = v_f(t_1)$ and $v_1 = v_f(t_2)$, and both of the tokens have reached their destination vertices at least once since we assumed the moves are both redundant, in other words, $t_1$ was shoved away from $v_2$ onto $v_1$, and $t_2$ was shoved away from $v_1$ onto $v_2$. Observe that it cannot be the case that $t_1$ arrived at $v_1$ and $t_2$ arrived at $v_2$ simultaneously, since otherwise this implies a swap on $(t_1, t_2)$ when both tokens are happy, which would not happen according to \HappySwapAlgo. Then WLOG we can assume $t_1$ was shoved to $v_1$ before $t_2$ was shoved to $v_2$. But this is a contradiction, since $t_2$ can only be shoved away from its destination vertex $v_1 = v_f(t_2)$, which is occupied by $t_1$ at the time of the shove on $t_2$. \\ 

    \textbf{Case II:} The swap on $(t_1, t_2)$ is a shove.\\
    WLOG we may assume that $t_1$ is a happy token, i.e., $v_f(t_1) = v_1$, hence the only move that can be inevitable is $t_2 \rightarrow v_1$, and we will prove that this is indeed the case. Suppose, for the sake of contradiction, that $t_2 \rightarrow v_1$ is redundant. By the definition of a shove we know that $v_2 \neq v_f(t_2)$, then from this, together with Corollary \ref{distance_one}, we know that $t_2$ must reach $v_f(t_2)$ after this swap, but this is a contradiction since $v_1 = v_f(t_1)$.
\end{proof}

\begin{lemma} [Lower Bound on OPT]
\label{lower_bound}
        For \WeightedTokenSwapping\ , $\textrm{OPT} \geq \sum\limits_{t\in T}(\omega(t)\cdot d(t))$.
        
\end{lemma}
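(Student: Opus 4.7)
The plan is to use a charging scheme that redistributes the cost of each swap between the two tokens involved, and then lower-bound the number of swaps each token must participate in.

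First I would observe that for any valid swap sequence, the total weighted cost can be rewritten as
\[
\sum_{\text{swaps } (t_1,t_2)} \bigl(\omega(t_1) + \omega(t_2)\bigr) \;=\; \sum_{t \in T} \omega(t) \cdot k(t),
\]
where $k(t)$ denotes the total number of swaps in which token $t$ participates. This follows by charging each swap's cost of $\omega(t_1)+\omega(t_2)$ as $\omega(t_1)$ to $t_1$ and $\omega(t_2)$ to $t_2$, and then grouping by token.

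Next I would show that $k(t) \geq d(t)$ for every token $t$. Each swap involving $t$ changes $v(t)$ along an edge of $G$, so after $k(t)$ swaps on $t$, the sequence of vertices $v_0(t) = u_0, u_1, \dots, u_{k(t)} = v_f(t)$ is a walk in $G$ of length $k(t)$ from $v_0(t)$ to $v_f(t)$. Since $d(v_0(t),v_f(t)) = d(t)$ is by definition the shortest-path distance in $G$, we must have $k(t) \geq d(t)$.

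Combining these two facts yields
\[
\textrm{OPT} \;=\; \sum_{t \in T} \omega(t)\cdot k_{\textrm{OPT}}(t) \;\geq\; \sum_{t \in T} \omega(t)\cdot d(t),
\]
where $k_{\textrm{OPT}}(t)$ counts swaps of $t$ in an optimal sequence. No step here looks like a real obstacle; the only thing to be careful about is making sure the walk argument uses the graph-distance $d$ (which works for general graphs, not just trees), and that the weights $\omega(t) > 0$ make the charging well-defined.
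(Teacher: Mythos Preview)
Your proof is correct and is essentially the same as the paper's: the paper compresses your two observations into the single sentence that moving token $t$ from $v_0(t)$ to $v_f(t)$ costs at least $\omega(t)\cdot d(t)$, then sums over $t$. Your version simply makes the charging $\sum_{\text{swaps}}(\omega(t_1)+\omega(t_2)) = \sum_t \omega(t)\,k(t)$ and the walk argument $k(t)\ge d(t)$ explicit; the only nitpick is that positivity of $\omega$ is needed to preserve the inequality $\omega(t)\,k(t)\ge \omega(t)\,d(t)$, not to make the charging ``well-defined.''
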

\begin{proof}
Note that every algorithm that solves \WeightedTokenSwapping\ needs to move each token $t$ from $v_0(t)$ to $v_f(t)$, which costs at least $\omega(t)\cdot d(v_0(t), v_f(t)) = \omega(t)\cdot d(t)$, hence the lemma follows by summing over each token $t \in T$.
\end{proof}

Now we are ready to prove Theorem \ref{treealg}.
    Denote ALG as the cost induced by \HappySwapAlgo \ on the \WeightedTokenSwapping\ instance and OPT as the cost induced by the optimal solution for the \WeightedTokenSwapping 
  instance. We will prove that $\dfrac{\textrm{ALG}}{\textrm{OPT}} \leq 1+\dfrac{W}{w}$.
 
    From Lemma $\ref{lower_bound}$ we know that $\textrm{OPT} \geq \sum\limits_{t\in T}(\omega(t)\cdot d(t))$. Now we will exhibit an upper bound on ALG. We know that $\textrm{ALG} = \sum\limits_{\text{swap($t$, $t'$) }}\omega(t)+\omega(t')$, where the swaps are induced by \HappySwapAlgo. By Lemma \ref{upper_bound}, we can index each individual summand according to each token's inevitable swaps, i.e., if we set $\tau_t^i$ to be the incident token on the $i$-th inevitable swap of token $t$, then $\textrm{ALG} = \sum\limits_{t\in T}\sum\limits_{i=1}^{d(t)}(\omega(t)+\omega(\tau_t^i)) \leq \sum\limits_{t\in T}\sum\limits_{i=1}^{d(t)}(\omega(t)+W) 
\leq \textrm{OPT}+\sum\limits_{t\in T}(W\cdot d(t))$. Now we have $$\dfrac{\textrm{ALG}}{\textrm{OPT}} \leq 1+ \dfrac{W \cdot \sum\limits_{t\in T}d(t)}{\sum\limits_{t\in T}(\omega(t)\cdot d(t))} \leq 1 + \dfrac{W}{w}$$ where we used the fact that  $\omega(t) \geq w$.\\

\subsection{$O(n^{1-\varepsilon})$-straying algorithms do not achieve better than a ($1 + \frac{W}{w}$)-approx.}

In this section we will prove the following theorem:

\treelb*

Recall that an algorithm for \TokenSwapping\ is \textbf{$\ell$-straying} if the following property holds: At any time during the execution of the algorithm, $\min_{v \in p(t)} d(v(t), v) \leq \ell$ for any $t \in T$. In other words, the distance of any token $t$ from $p(t)$ is at most $\ell$.



\begin{proof}
    We will construct an instance of \WeightedTokenSwapping\ as follows, see Figure \ref{fig:straying-approximation}. Consider a path $P$ with two stars attached on both ends. Suppose $\ell = O(n^{1-\varepsilon/2})$, and $P$ has $\ell$  vertices in total, $h_1, \cdots, h_{\ell}$, where each of them is occupied by a token with weight $W$ that is happy originally. Each of the stars contains $N=\frac{n-\ell}{2}$ leaf vertices in total, namely, $v_{1,1},\cdots, v_{1,N}$ and $v_{2,1},\cdots, v_{2,N}$, and each of these vertices is initially occupied by a token with weight $w$. The goal is to exchange the tokens initially on $v_{1,i}$ with $v_{2,i}$ for $i=1,\cdots, N$ (while the tokens on the $h_i$'s begin on their destinations). Since all $v_{2,j}$'s are symmetric, this \WeightedTokenSwapping\ instance is the same up to permuting the index $i$ in $v_{1,i}$'s and permuting the index $j$ in $v_{2,j}$'s.\\

    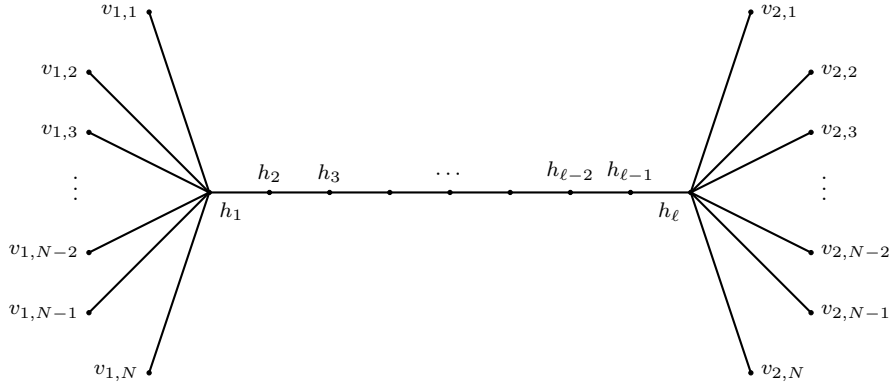
\begin{figure}[h!]
\begin{center}
\begin{tikzpicture}[scale=0.8, line cap=round, line join=round, >=triangle 45]


\filldraw (-4,0) circle (1pt);
\node[anchor=north west, font=\scriptsize] at (-4,0) {$h_{1}$};

\filldraw (4,0) circle (1pt);
\node[anchor=north east, font=\scriptsize] at (4,0) {$h_{\ell}$};

\foreach \x/\label in {-3/h_2, -2/h_3, 2/h_{\ell-2}, 3/h_{\ell-1}} {
    \filldraw (\x,0) circle (1pt);
    \node[anchor=south, font=\scriptsize] at (\x,0) {$\label$};
}

\foreach \x in {-1,0,1} {
    \filldraw (\x,0) circle (1pt);
}

\node at (0,0.3) {\scriptsize $\cdots$};

\draw [line width=0.8pt] (-5,3) -- (-4,0);
\draw [line width=0.8pt] (-4,0) -- (4,0);
\draw [line width=0.8pt] (-6,2) -- (-4,0);
\draw [line width=0.8pt] (-6,1) -- (-4,0);
\draw [line width=0.8pt] (-6,-1) -- (-4,0);
\draw [line width=0.8pt] (-6,-2) -- (-4,0);
\draw [line width=0.8pt] (-5,-3) -- (-4,0);

\draw [line width=0.8pt] (4,0) -- (5,3);
\draw [line width=0.8pt] (6,2) -- (4,0);
\draw [line width=0.8pt] (6,1) -- (4,0);
\draw [line width=0.8pt] (6,-1) -- (4,0);
\draw [line width=0.8pt] (6,-2) -- (4,0);
\draw [line width=0.8pt] (4,0) -- (5,-3);

\filldraw (-5,3) circle (1pt);
\node[anchor=east, font=\scriptsize] at (-5,3) {$v_{1,1}$};

\filldraw (-6,2) circle (1pt);
\node[anchor=east, font=\scriptsize] at (-6,2) {$v_{1,2}$};

\filldraw (-6,1) circle (1pt);
\node[anchor=east, font=\scriptsize] at (-6,1) {$v_{1,3}$};

\node[anchor=east, font=\scriptsize] at (-6,0.2) {$\vdots$}; 

\filldraw (-6,-1) circle (1pt);
\node[anchor=east, font=\scriptsize] at (-6,-1) {$v_{1,N-2}$};

\filldraw (-6,-2) circle (1pt);
\node[anchor=east, font=\scriptsize] at (-6,-2) {$v_{1,N-1}$};

\filldraw (-5,-3) circle (1pt);
\node[anchor=east, font=\scriptsize] at (-5,-3) {$v_{1,N}$};

\filldraw (5,3) circle (1pt);
\node[anchor=west, font=\scriptsize] at (5,3) {$v_{2,1}$};

\filldraw (6,2) circle (1pt);
\node[anchor=west, font=\scriptsize] at (6,2) {$v_{2,2}$};

\filldraw (6,1) circle (1pt);
\node[anchor=west, font=\scriptsize] at (6,1) {$v_{2,3}$};

\node[anchor=west, font=\scriptsize] at (6,0.2) {$\vdots$}; 

\filldraw (6,-1) circle (1pt);
\node[anchor=west, font=\scriptsize] at (6,-1) {$v_{2,N-2}$};

\filldraw (6,-2) circle (1pt);
\node[anchor=west, font=\scriptsize] at (6,-2) {$v_{2,N-1}$};

\filldraw (5,-3) circle (1pt);
\node[anchor=west, font=\scriptsize] at (5,-3) {$v_{2,N}$};

\end{tikzpicture}
\end{center}
\caption{$O(n^{1-\varepsilon})$-straying algorithms do not perform better than $(1+\frac{W}{w}-\delta)$-approximation.}
\label{fig:straying-approximation}
\end{figure}

\paragraph{Upper Bound on OPT}
    We may exhibit an upper bound of OPT by proposing the following solution to \WeightedTokenSwapping\ with three stages: 
    \subparagraph{Stage I} Move the tokens on $h_1,\cdots, h_{\ell}$ to $\ell$ leaves $v_{1,1},\cdots, v_{1,\ell}$ in that order. Each token on $h_i$ will travel a distance of $i$ and each swap is of cost $(W+w)$, since in this stage, each swap on the token that is initially on $h_i$ will involve another token that is initially on $v_{1,j}$ for some $j=1,\cdots, N$. Hence the total cost is $\sum\limits_{i=1}^{\ell} (iW+iw) = \frac{1}{2}\ell(\ell+1)(W+w)$. Note that after Stage I, $h_i$ is occupied by a token that was initially on $v_{1,\ell + 1 - i}$.
    \subparagraph{Stage II} In this stage, we will bring all the tokens with weights $w$ to their destinations by a series of happy swaps, except for those tokens with destinations being one of $v_{1,1},\cdots, v_{1,\ell}$ (which are occupied by tokens with weight $W$ now).
    More precisely, each token on $h_i$ will travel a distance of $\ell + 1 - i$ to reach its destination, which contributes $\sum\limits_{i=1}^{\ell} (\ell + 1 - i)$ to the total traveled distance in this stage, and each token on $v_{1,j}$ for $j = \ell + 1, \cdots, N$ will travel a distance of $\ell  +1$ to reach its destination $v_{2,j}$, which contributes $(N-\ell)(\ell+1)$ to the total traveled distance in this stage. Additionally, it is easy to see that each swap in this stage is a happy swap comprising two tokens, one was initially on $v_{1,i}$'s at the beginning of Stage I, one was initially on $v_{2,i}$'s at the beginning of Stage I, and the swap brings both of them closer to their destinations. Observe that the above discussion on the contribution to the total distance is from the perspective of tokens which were initially on $v_{1,i}$'s at the beginning of Stage I, and no two such tokens swapped with each other in this stage. So in order to get the total distance that all tokens travel in stage II, we may multiply the above distance by 2, to account for the contribution from the tokens that were initially on $v_{2,i}$'s at the beginning of Stage I. Hence the total distance that all tokens travel in stage II is $2(N-\ell)(\ell+1) + 2\sum\limits_{i=1}^{\ell} (\ell + 1 - i) = 2(N-\ell)(\ell+1) + \ell(\ell+1)$. Since each swap in this stage is a happy swap, each swap decreases the total distance by exactly 2, and costs $(w+w) = 2w$, which results in a cost of $(2(N-\ell)(\ell+1) + \ell(\ell+1))w$. 
    \subparagraph{Stage III} Move the tokens with weights $W$ to their original positions by a series of swaps with cost $(W+w)$, which costs another $\frac{1}{2}\ell(\ell+1)(W+w)$. This is completely analogous to Stage I, which can be seen as a reverse process, hence they have the same cost. Note that all tokens have reached their destinations since the tokens on the $h_i$'s return to their original locations, while the tokens on the leaves have exchanged their positions.

    Adding the cost of the 3 stages together we have $\textrm{OPT} \leq \ell(\ell+1)(W+w) + (2(N-\ell)(\ell+1) + \ell(\ell+1))w$.

\paragraph{Lower Bound on ALG}
    Next we obtain a lower bound of $\textrm{ALG}$ for every $k$-straying algorithm, where $k = O(n^{1-\varepsilon})$. According to the definition of $k$-straying algorithm, only those $2k$ vertices that are within distance $k$ from the ends of the path $P$ have the opportunity to be moved into leaves, in other words, there must be $(\ell-2k)$ vertices which are not able to leave path $P$. This implies that, in order to reach the final state, there must be at least $2N(\ell-2k)$ swaps that cost $(W+w)$ each, since every token initially on the leaves must traverse the entire path $P$ to the other side,  hence $\textrm{ALG} \geq 2N(\ell-2k)(W+w)$.

    Now we have 

    \begin{align*}
        \frac{\textrm{ALG}}{\textrm{OPT}} &\geq \frac{2N(\ell-2k)(W+w)}{\ell(\ell+1)(W+w) + (2(N-\ell)(\ell+1) + \ell(\ell+1))w} \\
        &= \frac{2N\ell(W + w) + o(N\ell)}{2N\ell w + o(N\ell)} \\
        &> 1+W/w-\delta \text{ \ \ for any constant $\delta$ and sufficiently large $n$}
    \end{align*}
    where we used the fact that $O(\ell^2) = O(n^{2-\varepsilon})$ and hence is dominated by $O(N\ell) = O(n^{2-\frac{\varepsilon}{2}})$, and $O(\ell) = O(n^{1-\varepsilon/2}) > O(n^{1-\varepsilon}) = O(k)$.
    \end{proof}

\section{Results for General Graphs}

\subsection{A polynomial time $(2+2\frac{W}{w})$-approx. algorithm}
In this section we will prove the following theorem:

\genalg*

\subsubsection{Algorithm}


Our algorithm is a mild extension of the \textbf{Cycle Algorithm} for general graphs (See Appendix A of \cite{hiken2024improvedhardnessofapproximationtokenswapping}) from \TokenSwapping\  to \WeightedTokenSwapping. In particular, our \textbf{Extended Cycle Algorithm} is the same as the Cycle Algorithm for general graphs, except in each cycle we choose the \emph{smallest weight} token to go around the cycle (instead of an arbitrary token in the cycle). 
The following is a formal description of the algorithm.

We will denote $\pi: T \xrightarrow[]{} T$ as the permutation induced by $v_0$ and $v_f$ on the set of tokens $T$. Namely, token $t_i\in T$ has final destination vertex that is currently occupied by token $\pi(t_i)$.
\begin{algorithm}
    \caption{Extended Cycle Algorithm}
\begin{algorithmic}[1]
\Function{\ExtendedCycleAlgo}{\WeightedTokenSwapping($G(V,E), T, v_0, v_f, \omega$)}:
\State Decompose $\pi$ into disjoint cycles: $\pi = C_1 \circ C_2\circ \cdots \circ C_r$  where cycle $C_j$ has length $\ell_j$, token set $\{t_{j,1}, \cdots, t_{j,\ell_j}\}$ and $\pi(t_{j,i}) = t_{j,i+1}$, with indices modulo $\ell_j$. Denote $T_{j,k}$ as the set of tokens on a fixed shortest path between $t_{j,k}$ and $t_{j,k+1}$ where $|T_{j,k}| = d(t_{j,k}, t_{j,k+1}) - 1$
\For{$j=1, \cdots, r$}
\State $\ell_j \gets \arg \min_{k=1}^{\ell_j} \{\omega(t_{j,k})\}$
    \For{$k=\ell_j - 1, \cdots, 1$}
    \State move $t_{j,k}$ along its shortest path to its destination \Statex \hspace{6em} (i.e.~$t_{j,k}$ swaps with all tokens in $T_{j,k}$ and then $t_{j,\ell_j}$).
    \State move $t_{j,\ell_j}$ along the same path in the opposite direction 
    \Statex \hspace{6em} (i.e. $t_{j,\ell_j}$ swaps with all tokens in $T_{j,k}$ and reaches $v_0(t_{j,k})$).
    \EndFor
\EndFor
\EndFunction
\end{algorithmic}
\end{algorithm}

Since 
the Cycle Algorithm for general graphs correctly solves \TokenSwapping\ and runs in polynomial time \cite{hiken2024improvedhardnessofapproximationtokenswapping}, it suffices to show that \ExtendedCycleAlgo\ is a $(2 + 2\frac{W}{w})$-approximation for \WeightedTokenSwapping.

\subsubsection{Approximation Analysis}
    Throughout the proof, the indices are to be understood modulo cycle length $l_j$. We can exhibit a lower bound on OPT as follows:
    \begin{align*}
        \textrm{OPT} &\geq \sum_{j=1}^r \sum_{k=1}^{\ell_j} (\omega(t_{j,k})\cdot d(t_{j,k}, t_{j,k+1}))
    \end{align*}
    due to the fact that each token $t_{j,k}$ has weight $\omega(t_{j,k})$, and destination vertex that is occupied by token $t_{j,k+1}$. For the sake of simplicity we denote $s_j := \sum_{k=1}^{\ell_j} (\omega(t_{j,k})\cdot d(t_{j,k}, t_{j,k+1}))$. 
    
    We can express ALG explicitly:
    \begin{align*}
        \textrm{ALG} &= \sum_{k=1}^{\ell_j - 1} \big( (\omega(t_{j,k}) + \omega(t_{j,\ell_j}))\cdot d(t_{j,k}, t_{j,k+1}) + 2\cdot \sum_{k=1}^{\ell_j-1}\sum_{t \in T_{j,k}} \omega(t)\big)\\
    \end{align*}
Indeed, it follows from the fact that $t_{j,k}$ and $t_{j, l_j}$ both travel a distance of $d(t_{j,k}, t_{j,k+1})$ and each of the tokens in $T_{j,k}$ moves twice, once to swap with each of $t_{j,k}$ and $t_{j, l_j}$. Now we exhibit an upper bound of ALG as follows, where we use the fact that  $\omega(t_{j,\ell_j}) = \min_{k=1}^{\ell_j} \{\omega(t_{j,k})\}$ and trivially bound $\omega(t_i) \leq W$:
        \begin{align*}
             & \sum_{k=1}^{\ell_j - 1} \big( (\omega(t_{j,k}) + \omega(t_{j,\ell_j}))\cdot d(t_{j,k}, t_{j,k+1}) + 2\cdot \sum_{k=1}^{\ell_j-1}\sum_{t \in T_{j,k}} \omega(t)\big) \\
            &\leq \sum_{k=1}^{\ell_j - 1} (2\cdot\omega(t_{j,k})\cdot d(t_{j,k}, t_{j,k+1})) + 2\cdot \sum_{k=1}^{\ell_j-1}\sum_{t \in T_{j,k}} W\\
            &< 2s_j  +2W \cdot \sum_{k=1}^{\ell_j-1}d(t_{j,k}, t_{j,k+1})
        \end{align*} 

        Hence we have the following bound on the approximation ratio: 
        \begin{align*}
            \frac{\textrm{ALG}}{\textrm{OPT}} &\leq \frac{\sum_{j=1}^r \big( 2s_j  +2W \cdot \sum_{k=1}^{\ell_j-1}d(t_{j,k}, t_{j,k+1})\big)}{\sum_{j=1}^r s_j} \\
            &= 2 + \frac{\sum_{j=1}^r 2W \cdot \big(\sum_{k=1}^{\ell_j-1}d(t_{j,k}, t_{j,k+1})\big)}{\sum_{j=1}^r s_j} \\
            &= 2 + \frac{2W\cdot \sum_{j=1}^r \sum_{k=1}^{\ell_j-1} d(t_{j,k}, t_{j,k+1})}{\sum_{j=1}^r \sum_{k=1}^{\ell_j} (\omega(t_{j,k})\cdot d(t_{j,k}, t_{j,k+1}))} \\
            &< 2 + \frac{2W \sum_{j=1}^r \sum_{k=1}^{\ell_j}d(t_{j,k}, t_{j,k+1}) }{w\sum_{j=1}^r \sum_{k=1}^{\ell_j}d(t_{j,k}, t_{j,k+1})}
            \\
            &= 2 + 2\frac{W}{w}.
        \end{align*}

        \subsubsection{Generalized Local Optimality}\label{sec:opt}

            Recall that an algorithm for \TokenSwapping\  or \WeightedTokenSwapping\  is said to be \textbf{generalized locally optimal} if every swap takes at least one of the two participating tokens closer to either its destination vertex or its starting vertex. 

            In this section, we briefly argue that \ExtendedCycleAlgo\ is generalized locally optimal. First, the swaps of  $t_{j,k}$ on line 6 are permitted under the definition of generalized locally optimality since each swap moves  $t_{j,k}$ closer to its destination. Next, note that at the beginning of each iteration of the loop on line 3, each token in the graph is either on its starting or destination vertex. This means that the swaps of $t_{j,k}$ on line 6 move each token in $T_{j,k}$ off either its starting or destination vertex. Then, the swaps of $t_{j,\ell_j}$ on line 7 move each token in $T_{j,k}$ back onto (i.e.~closer to) its starting or destination vertex. Thus, the line 7 swaps are permitted under the definition of generalized locally optimality. This completes the argument.

\subsection{Generalized locally optimal algorithms don't do better than $(2+2\frac{W}{w})$-approx.}

In this section we will prove the following theorem:

\genlb*



\begin{proof}
We will follow the construction in Section 5.1 of \cite{hiken2024improvedhardnessofapproximationtokenswapping} and extend it by assigning weights of tokens properly, keeping everything else exactly the same. For the remainder of this section, we will intensively use the notation and terms in Section 5.1 of \cite{hiken2024improvedhardnessofapproximationtokenswapping}, including the parameters $p$ and $q$, the \emph{outer cycle} $C^{out}$, the \emph{inner cycles} $C_j^{in}$, \emph{outer tokens}, and \emph{inner tokens}.


Consider the following weight assignment: We assign weight $w$ to every outer token and weight $W$ to every inner token, see Figure 3 in \cite{hiken2024improvedhardnessofapproximationtokenswapping} and Figure \ref{fig:circle-graph} below.

\begin{figure}[H]
\label{genlowfig}
\centering
\hspace*{-3.40cm}
\definecolor{xdxdff}{rgb}{0.49,0.49,1}
\definecolor{ududff}{rgb}{0.3,0.3,1}
\begin{tikzpicture}
\clip(-11.2,-3.5) rectangle (11.6,3.8);
\draw [line width=2pt] (0,0) circle (3);

\foreach \x/\y/\a/\b in {
  -3/3/-90/0,
   3/3/180/270,
   3/-3/90/180,
  -3/-3/0/90
}{
  \draw[shift={(\x,\y)},line width=2pt]
    plot[domain=\a:\b,variable=\t] ({3*cos(\t)}, {3*sin(\t)});
}

\foreach \x/\y in {
  0.55/3.66, 1.12/3.41, 1.65/3.11, 2.18/2.65, 2.61/2.13,
  2.88/1.59, 3.03/1.05, 3.09/0.43
}{
  \node[anchor=north west] at (\x,\y) {$w$};
}

\foreach \x/\y in {
  1.40/1.06, 0.13/2.69, 0.95/1.45, 2.08/0.75, 0.42/2.04
}{
  \node[anchor=north west] at (\x,\y) {$W$};
}

\foreach \x/\y in {
  2.94/0.59, 2.77/1.15, 2.49/1.67, 2.12/2.12, 1.67/2.49,
  1.15/2.77, 0.59/2.94, 0/3, -0.59/2.94, -1.15/2.77,
  -1.67/2.49, -2.12/2.12, -2.49/1.67, -2.77/1.15, -2.94/0.59,
  -3/0, 3/0, -2.94/-0.59, -2.77/-1.15, -2.49/-1.67,
  -2.12/-2.12, -1.67/-2.49, -1.15/-2.77, -0.59/-2.94,
  0/-3, 0.59/-2.94, 1.15/-2.77, 1.67/-2.49, 2.12/-2.12,
  2.49/-1.67, 2.77/-1.15, 2.94/-0.59
}{
  \fill[ududff] (\x,\y) circle (2.5pt);
}

\foreach \x/\y in {
  -2.05/0.15, -2.06/-0.15, -1.35/0.49, -0.80/0.96, -0.41/1.49,
  -0.15/2.05, 0.15/2.07, 0.40/1.50, 0.83/0.93, 1.44/0.44,
  2.11/0.13, -1.35/-0.49, -0.79/-0.97, -0.40/-1.51, -0.14/-2.09,
  0.14/-2.09, 0.42/-1.47, 0.89/-0.87, 1.46/-0.43, 2.11/-0.14
}{
  \fill[xdxdff] (\x,\y) circle (2.5pt);
}
\end{tikzpicture}
\caption{Generalized locally optimal algorithms do not achieve better than a $(2 + 2\frac{W}{w})$-approx.}
\label{fig:circle-graph}
\end{figure}
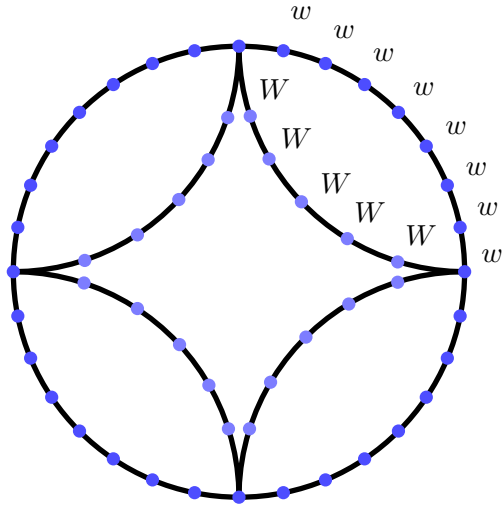

\begin{lemma}
\label{OPTupperbound}
    OPT $\leq 2w\cdot pq^2$
\end{lemma}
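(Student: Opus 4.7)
The plan is to upper-bound OPT by exhibiting an explicit swap sequence of total cost at most $2w \cdot pq^2$. The key leverage comes from the chosen weight assignment: every outer token has weight $w$ and every inner token has weight $W$, so a swap between two outer tokens costs exactly $2w$, while any swap involving an inner token is more expensive. To match the target bound of $2w \cdot pq^2$, the exhibited sequence must consist of at most $pq^2$ swaps, all of which are between outer tokens.

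First, I would recall the explicit optimal (or near-optimal) swap schedule for the corresponding unweighted instance, which is constructed in Section 5.1 of \cite{hiken2024improvedhardnessofapproximationtokenswapping}. The crucial property of that schedule is that the inner tokens never move --- they begin at their destinations and can remain fixed throughout --- while the outer tokens are permuted purely by swaps among themselves along the outer cycle $C^{out}$. I would import that exact schedule unchanged into our weighted setting, since the graph, starting assignment, and destination assignment are identical to the unweighted construction; only the weights have been added.

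Next, I would verify the count: the schedule from \cite{hiken2024improvedhardnessofapproximationtokenswapping} uses at most $pq^2$ swaps in total (this is the exact quantity driving the unweighted $4$-approximation barrier). Because every swap in the schedule involves two outer tokens, each swap contributes cost $w + w = 2w$ under our weighting $\omega$. Summing yields a total cost of at most $2w \cdot pq^2$ for this valid sequence of swaps, which reaches the destination assignment. Since OPT is by definition the minimum cost over all valid sequences, we conclude $\text{OPT} \leq 2w \cdot pq^2$.

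The main obstacle, and really the only one, is justifying the transfer of the unweighted schedule: I must confirm that the schedule from the prior construction indeed (i) leaves every inner token fixed and (ii) uses at most $pq^2$ swaps. Both facts are established in the cited section, so the proof reduces to carefully pointing the reader to them and noting that reweighting the tokens does not invalidate a swap sequence, it only changes its cost --- and in this case the cost evaluates to exactly $2w$ per swap.
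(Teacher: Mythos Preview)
Your proposal is correct and matches the paper's proof essentially verbatim: both import the unweighted solution from Claim~5.2 of \cite{hiken2024improvedhardnessofapproximationtokenswapping}, observe that its $pq^2$ swaps along $C^{out}$ involve only outer tokens of weight $w$, and conclude the cost is at most $2w\cdot pq^2$. The only difference is that you spell out more explicitly why the transfer is valid, which is a harmless elaboration.
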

\begin{proof}
    We may obtain an upper bound on OPT by showing a solution to our instance of \WeightedTokenSwapping.  Consider the same solution as Claim 5.2 in \cite{hiken2024improvedhardnessofapproximationtokenswapping}, where there are $pq^2$ swaps along $C^{out}$ that solve the \TokenSwapping\  instance, and in all of these swaps both tokens have weight $w$ in our weighted instance.  Since each swap costs $2w$, the total cost of the solution is  $\leq 2w\cdot pq^2$.
\end{proof}

Now, our goal is to obtain a lower bound on ALG (the cost of a generalized locally optimal algorithm). The following structural result, in the spirit of Claim 5.4 in \cite{hiken2024improvedhardnessofapproximationtokenswapping}, will be useful.
\begin{lemma}\label{tokennoexchange}
    For any two tokens $a, b$ that both start on the same inner cycle $C_j^{in}$, generalized locally optimal algorithms only perform swaps of $a,b$ along edges in $C_j^{in}$.
\end{lemma}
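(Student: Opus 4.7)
The plan is to argue by contradiction, following the template of Claim~5.4 in~\cite{hiken2024improvedhardnessofapproximationtokenswapping}, but accounting for the extra freedom granted by generalized local optimality (swaps that move a token closer to its \emph{starting} vertex, not only to its destination).

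First, I would recall the structural ingredient on which the original Claim~5.4 rests: each inner cycle $C_j^{in}$ is attached to the rest of the graph through a very narrow interface, and both $v_0(t)$ and $v_f(t)$ lie on $C_j^{in}$ for every inner token $t$. Consequently, whenever $t$ is currently off $C_j^{in}$, every shortest path from $v(t)$ to either $v_0(t)$ or $v_f(t)$ must pass through the interface, and the two conditions ``closer to the start'' and ``closer to the destination'' collapse into the single condition ``closer to the interface.'' This collapse is the observation that keeps the generalization tractable: the new option introduced by generalized local optimality does not produce any genuinely new move once $t$ leaves $C_j^{in}$.

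Now suppose for contradiction that a generalized locally optimal algorithm swaps $a, b$ (both starting on $C_j^{in}$) along an edge $e \notin C_j^{in}$. Along this swap, $a$ and $b$ exchange positions at two vertices outside $C_j^{in}$, so their distances to the interface change in opposite directions by one; by the collapse above, exactly one of $a, b$ moves closer to both its start and destination, while the other moves strictly farther from both. So the generalized local optimality of the single swap is not directly violated -- the contradiction must come from the \emph{history} of the computation. I would trace back to the first moment at which a token starting on $C_j^{in}$ is pushed off the cycle while another such token is already off, and use the rigidity of the attachment structure established in~\cite{hiken2024improvedhardnessofapproximationtokenswapping} together with the collapse observation to argue that such a prior step could not have been generalized locally optimal.

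The main obstacle will be making sure that the new ``toward the start'' clause really does collapse cleanly with ``toward the destination'' in every relevant configuration -- in particular at the interface itself, where a token is a single step away from being both on and off $C_j^{in}$, and subtle boundary cases can arise. Once those cases are pinned down, the remainder of the argument is a direct adaptation of~\cite{hiken2024improvedhardnessofapproximationtokenswapping}, with the collapse observation taking the place of the previously automatic fact that ``closer'' means ``closer to destination.''
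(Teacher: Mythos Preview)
Your proposal is more complicated than necessary and, as sketched, does not land on a contradiction. The paper's argument is a one-step affair: it proves the stronger fact that \emph{no} swap along an edge of $C^{out}$ ever occurs, by looking at the \emph{very first} such swap. Because it is the first, every token is still on its home inner cycle, so the two endpoints of that $C^{out}$ edge are occupied by tokens $t_1\in C_j^{in}$ and $t_2\in C_{j+1}^{in}$ from \emph{different} inner cycles. Crossing that edge moves $t_1$ onto $C_{j+1}^{in}$, strictly increasing its distance to every vertex of $C_j^{in}$ --- in particular to both $v_0(t_1)$ and $v_f(t_1)$ --- and symmetrically for $t_2$. So neither token gets closer to its start or its destination, contradicting generalized local optimality directly. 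Your ``collapse'' observation is exactly the right one, but it should be applied here, not further downstream.

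The specific gap in your plan is the moment you trace back to: ``the first moment at which a token starting on $C_j^{in}$ is pushed off the cycle while another such token is already off.'' At that moment the token being pushed off does move farther from its start and destination, but the token it swaps with could well be a token from the adjacent inner cycle that is returning home --- moving closer to both its start and destination --- so generalized local optimality is satisfied and you get no contradiction there. You would have to trace back yet further, to the first swap along $C^{out}$ across the whole instance, at which point you arrive at the paper's argument anyway. In short: drop the restriction that $a,b$ come from the same inner cycle when you set up the contradiction, go to the globally first $C^{out}$ swap, and the proof is immediate.
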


\begin{proof}

   It suffices to show that no swaps occur along edges of $C^{out}$. For the sake of contradiction, suppose otherwise and consider the first swap $(t_1,t_2)$ along some edge of $C^{out}$. Note that tokens $t_1$ and $t_2$ must be from adjacent inner cycles, i.e.~$t_1 \in C_j^{in}, t_2 \in C_{j+1}^{in}$ for some $j$. After the swap $(t_1,t_2)$, $t_1$ is on $C_{j+1}^{in}$. Thus, for any $v \in C_{j}^{in}$, the swap $(t_1,t_2)$ increases $d(t_1,v)$ by 1. Indeed, before the swap, the shortest path between $t_1$ and $v$ includes only edges in $C_j^{in}$, and after the swap, the shortest path between $t_1$ and $v$ takes the edge that connects $C_{j}^{in}$ with $C_{j+1}^{in}$ followed by edges in $C_j^{in}$. Since $v_0(t_1), v_f(t_1) \in C_j^{in}$ this implies that $d(t_1, v_0(t_1))$ and $d(t_1, v_f(t_1))$ both increase. Since and $v_0(t_2), v_f(t_2) \in C_{j+1}^{in}$, a symmetric argument shows that $d(t_2, v_0(t_2))$ and $d(t_2, v_f(t_2))$ both increase as well. This violates the definition of generalized local optimality, a contradiction. 
\end{proof}

Lemma \ref{tokennoexchange} allows us to lower bound the cost of  generalized locally optimal algorithms, by studying the number of swaps within the inner cycles. Now we are ready to show a lower bound on ALG that can be seen as a weighted analogue of Claim 5.5 in \cite{hiken2024improvedhardnessofapproximationtokenswapping}. 

\begin{lemma}
\label{singleinnercyclelowerbound} 
    For any $j$, the cost needed to bring all of the tokens on $C_j^{in}$ to their
target vertices while swapping only along edges in $C_j^{in}$ is at least $(W+w)(2pq-5p/2-4q-\frac{p(p-2)}{8})$. 
\end{lemma}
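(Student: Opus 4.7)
The plan is to promote the unweighted swap-count bound established in Claim 5.5 of \cite{hiken2024improvedhardnessofapproximationtokenswapping} into a weighted cost bound, by showing that each swap counted in that argument contributes at least $W+w$ to the cost.

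First I would rerun the combinatorial analysis of Claim 5.5 verbatim, which is weight-oblivious: combined with Lemma~\ref{tokennoexchange}, it already tells us that any generalized locally optimal algorithm performs at least $N := 2pq - 5p/2 - 4q - p(p-2)/8$ swaps along edges of $C_j^{in}$ in order to return all tokens of $C_j^{in}$ to their targets. At this stage the bound is purely on the \emph{number} of swaps and does not yet reference $\omega$.

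Next, I would establish the structural invariant that, at every moment during the algorithm, every edge of $C_j^{in}$ has at least one endpoint occupied by an inner token (weight $W$). The inner tokens start on all but the $O(1)$ vertices that $C_j^{in}$ shares with $C^{out}$, and the only way a second outer token can be brought into the interior of $C_j^{in}$ is via a swap with an inner token already there. A short induction, using generalized local optimality to rule out an outer token being pushed deep into $C_j^{in}$ past its shared vertex, shows that no two adjacent vertices of $C_j^{in}$ ever simultaneously hold outer tokens. Hence every swap along an edge of $C_j^{in}$ pairs either two inner tokens (cost $2W \geq W+w$) or one inner and one outer token (cost $W+w$). Multiplying the count by the per-swap cost gives $\mathrm{cost} \geq (W+w) \cdot N$, as required.

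The main obstacle I anticipate is the adjacent-outer-tokens invariant: the precise way in which $C_j^{in}$ attaches to $C^{out}$ determines whether outer tokens can be pumped onto $C_j^{in}$ in a way that produces outer-outer adjacencies. I expect that the $O(1)$ shared vertices per inner cycle, together with generalized local optimality (which forces any outer token that moves off $C^{out}$ to head toward either its start or its destination), will make this invariant fall out cleanly, but it is the step most likely to require careful case analysis. If that invariant ever fails by at most an additive constant along the cycle, the resulting loss is absorbed into the $-5p/2 - 4q - p(p-2)/8$ lower-order terms without affecting the leading $2pq$ coefficient.
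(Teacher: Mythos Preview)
Your plan has a genuine gap: the structural invariant you want---that no edge of $C_j^{in}$ ever has two outer tokens on its endpoints---is false, and the misconception behind it is the sentence ``the inner tokens start on all but the $O(1)$ vertices that $C_j^{in}$ shares with $C^{out}$.'' In the construction from \cite{hiken2024improvedhardnessofapproximationtokenswapping}, each inner cycle $C_j^{in}$ carries $p/2$ outer tokens, not $O(1)$; these sit on a contiguous arc that $C_j^{in}$ shares with $C^{out}$, so adjacent pairs of weight-$w$ tokens are present from the very start. Consequently $(w,w)$ swaps along $C_j^{in}$ are perfectly legal (and generalized local optimality does not forbid them, since such a swap can move one outer token toward its destination), and your per-swap lower bound of $W+w$ cannot be applied to all swaps. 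The failure is not ``at most an additive constant'': up to $\Theta(p^2)$ cheap swaps are possible, which is precisely the order of the correction term $p(p-2)/8$.

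Relatedly, you have misread where the $-p(p-2)/8$ comes from. Claim~5.5 of \cite{hiken2024improvedhardnessofapproximationtokenswapping} gives the swap count $2pq - 5p/2 - 4q$ \emph{without} that term. The paper's proof of the present lemma obtains the extra $-p(p-2)/8$ by a completely different (and much simpler) device than your invariant: among the $p/2$ outer tokens on $C_j^{in}$, any pair can be assumed to swap at most once, since two swaps of equal-weight tokens can be deleted without changing the final configuration and only lowering the cost. Hence the number of $(w,w)$ swaps is at most $\binom{p/2}{2} = p(p-2)/8$; subtracting this from the Claim~5.5 count leaves at least $2pq - 5p/2 - 4q - p(p-2)/8$ swaps each costing at least $W+w$. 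That counting argument is the missing idea you need in place of the adjacency invariant.
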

\begin{proof}
We begin by recalling Claim 5.5 of \cite{hiken2024improvedhardnessofapproximationtokenswapping}, which asserts that for any $j$, the number of swaps needed to bring all of the tokens on $C_j^{in}$ to their
target vertices while swapping only along edges in $C_j^{in}$ is greater than $2pq-5p/2-4q$. Note that in our weighted setting, there are three types of swaps in our \WeightedTokenSwapping\  instance: $(W,W), (W,w), (w,w)$. Our proof will upper bound the number of swaps of type $(w,w)$ by $\frac{p(p-2)}{8}$, to get a lower bound of the number of swaps of the other two types. Since these two types of swaps each cost at least $(W+w)$, this yields the desired bound.

Recall that $C_j^{in}$ contains $\frac{p}{2}$ outer tokens, all of weight $w$. We may assume for each pair of outer tokens, they swap at most once throughout the solution swap sequence. Indeed, if such a pair swapped twice, we could remove the two swaps from the swap sequence to obtain another valid swap sequence with strictly smaller cost, since the two tokens have the same weight $w$. This implies that the number of swaps over edges of $C_j^{in}$ of type $(w,w)$ is at most ${p/2 \choose 2} = \frac{p(p-2)}{8}$.

Combining this with Claim 5.5 of \cite{hiken2024improvedhardnessofapproximationtokenswapping}, we get the number of swaps of type $(W,w)$ and $(W,W)$ is at least $(2pq-5p/2-4q-\frac{p(p-2)}{8})$. Since each such swap costs at least $(W+w)$, we get the total cost needed to bring all of the tokens on $C_j^{in}$ to their
target vertices while swapping only along edges in $C_j^{in}$ is at least $(W+w)(2pq-5p/2-4q-\frac{p(p-2)}{8})$.
\end{proof}

Now we are ready to complete the proof of Theorem \ref{genlb}.
    There are $2q$ inner cycles in total, and each of them costs at least $(W+w)(2pq-5p/2-4q-\frac{p(p-2)}{8})$ from Lemma $\ref{singleinnercyclelowerbound}$. These costs are additive since by Lemma $\ref{tokennoexchange}$, generalized locally optimal algorithms do not perform any swaps between tokens from different inner cycles. So we get ALG $\geq 2q\cdot (W+w)(2pq-5p/2-4q-\frac{p(p-2)}{8})$. Combining this with Lemma $\ref{OPTupperbound}$, we get 
    \begin{align*}
        \frac{\textrm{ALG}}{\textrm{OPT}} &\geq \frac{2q\cdot (W+w)(2pq-5p/2-4q-\frac{p(p-2)}{8})}{2w\cdot pq^2}\\
        &> 2+2W/w-\delta \text{  \ \ for any constant $\delta$ and sufficiently large $p,q$ and $p= o(q)$.}
    \end{align*}
    Note that it is valid to take sufficiently large $p,q$ and $p= o(q)$, because the only constraints are that $p$ is even, $n = pq + 2pq(q-1) = 2pq^2-pq$, and $p,q$ are at least a certain constant. This concludes the proof of Theorem \ref{genlb}.
    \end{proof}


\bibliographystyle{plain}
\bibliography{references}

\end{document}